\newenvironment{proof}{{\indent \indent \it Proof:}}{\hfill $\blacksquare$\par}
\newenvironment{notation}{{\indent \it Notation:}}
\newtheorem{theorem}{Theorem}
\newtheorem{lemma}{Lemma}
\newtheorem{proposition}{Proposition}
\newtheorem{remark}{Remark}
\begin{document}

\title{\huge{Secure Communication for Spatially Correlated Massive MIMO with Low-Resolution DACs}}
\author{Dan Yang, \IEEEmembership{Student Member, IEEE,}
Jindan Xu, \IEEEmembership{Member, IEEE,}
Wei Xu, \IEEEmembership{Senior Member, IEEE,}
\\Ning Wang, \IEEEmembership{Member, IEEE,}
Bin Sheng, \IEEEmembership{Member, IEEE,}
and A. Lee Swindlehurst, \IEEEmembership{Fellow, IEEE}

\thanks{D. Yang, J. Xu, W. Xu, and B. Sheng are with the National Mobile Communications Research Laboratory, Southeast University, Nanjing 210096, China (email: {dyang@seu.edu.cn; jdxu@seu.edu.cn; wxu@seu.edu.cn; sbdtt@seu.edu.cn}).}

\thanks{N. Wang is with the School of Information Engineering, Zhengzhou University, Zhengzhou 450001, China.}

\thanks{A. Lee Swindlehurst is with the Center for Pervasive Communications and Computing, University of California at Irvine, Irvine, CA 92697 USA (e-mail: swindle@uci.edu).}}
\maketitle
\begin{abstract}
In this paper, the performance of a secure massive multiple-input multiple-output (MIMO) system adopting low-resolution digital-to-analog converters (DACs) is analyzed over spatially correlated wireless channels. A tight lower bound for the achievable secrecy rate is derived with artificial noise (AN) transmitted in the null space of the user channels. Using the analytical results, the impact of spatial correlation on the secrecy rate is explicitly evaluated in the presence of low-resolution DACs. The analytical observations reveal that using low-resolution DACs can be beneficial to the secrecy performance compared with ideal DACs, when the channels are strongly correlated and optimal power allocation is not employed. 
\end{abstract}

\begin{IEEEkeywords}
Physical layer security, massive MIMO, spatial correlation, digital-to-analog converters (DACs)
\end{IEEEkeywords}
\setlength{\parskip}{0\baselineskip} 
\section{Introduction}

\IEEEPARstart{P}{hysical} layer security (PLS) has become an emerging technology for securing wireless communication without relying upon traditional cryptographic mechanisms. Compared to conventional upper-layer cryptographic schemes, PLS has the advantages of low computational complexity and low resource consumption [1]. Massive multiple-input multiple-output (MIMO) systems provide another disruptive technology for fifth generation (5G) cellular communications, and have shown great potential in improving spectral and energy efficiency. The use of large-scale antenna arrays in massive MIMO provides a large excess of redundant spatial degrees of freedom (DoF), which can be exploited to achieve secure physical layer transmission. This idea has been attracting increasing research interest in the past few years [2], [3]. 

Massive MIMO transmission requires a very high power consumption if high-resolution digital-to-analog converters (DACs) are employed in the RF chains for each antenna. At the transmitter, power expenditure is dominated by power amplifiers (PAs), which are usually required to operate within a high linearity regime to avoid distortion. A practical solution to the above challenge is to use low-resolution DACs, which relaxes the requirement of linearity and allows the amplifiers to operate closer to saturation, thus increasing the efficiency of PAs [4], [5]. In [6], both finite-bit DACs at base station (BS) and finite-bit analog-to-digital converters (ADCs) at user side were analyzed in the massive MIMO downlink. The work was then extended in [7] by considering spatially correlated channels. Further in [8], a constant envelope precoding technique was devised for the multiuser MIMO with one-bit DACs.

The effect of hardware impairments (HWIs) on spectral efficiency of massive MIMO systems has been studied in [9]. Regarding the secrecy performance, the authors in [10] analyzed the effects of HWIs on secrecy rate, where ideal converters with infinite resolution were considered. Secure communication in a massive MIMO system with low-resolution DACs was investigated in [11], which revealed that low-resolution DACs can achieve superior secrecy rate under certain conditions, e.g., at low SNR or with a large power allocation factor.

Most of the existing works on low-resolution DACs transmissions have focused on the assumption of independent identically distributed (i.i.d.) channels for massive MIMO. However, in practice, the limited space between the BS antennas as well as the rich scattering propagation environment can result in spatial correlation. The impact of correlated Rayleigh fading channels on optimal multiuser loading was analyzed in [6] by applying asymptotic random matrix theory. How spatial correlation impacts secure massive MIMO communication with low-resolution DACs is still an open problem.

In this paper, we focus on secure transmission in the massive MIMO downlink when low-resolution DACs are employed. A tight lower bound for the ergodic secrecy rate is derived that explicitly characterizes the impact of channel correlation on the secrecy rate for typical correlated channels. An optimal power allocation strategy is proposed, which suggests that more power should be allocated to AN when strong channel correlation is present. It is revealed that using low-resolution DACs can improve the secrecy performance for a fixed power allocation factor under strong spatially correlated channels.  

\begin{notation}
$\bf {X}^{*}$, $\bf {X}^{\mit T}$, $\bf{X}^{\mit H}$ and $\rm tr(\bf{X})$ represent the conjugate, transpose, conjugate transpose and trace of matrix $\bf{X}$, respectively. $\mathbb{E}\{\cdot\}$ is the expectation operator. $\rm diag(\cdot)$ denotes a diagonal matrix that retains only the diagonal elements of the input matrix, and $\widetilde{\rm diag}(\cdot)$ represents a diagonal matrix with the input vector as its diagonal entries.
\end{notation}

\vspace{-0.3cm}
\section{System Model}
The secure massive MIMO system under investigation comprises one $N$-antenna BS, $K$ single-antenna legitimate users, and one $M$-antenna passive eavesdropper. The channel matrices are modeled based on the Kronecker channel model as shown in [12]. To make the problem more tractable, we consider the system with a common correlation matrix at the BS. Specifically, the channel between the BS and the users is modeled as $\bf{H}=\bf{D}\rm^{\frac{1}{2}}\widetilde{\bf{H}}\bf{R}\rm^{\frac{1}{2}}$, where the elements of $\widetilde{\bf{H}}\in\mathbb{C}^{K\times N}$ are i.i.d. Gaussian random variables with zero mean and unit variance, the diagonal matrix $\bf{D}\in\mathbb{C}^{\mit K\times K}$ characterizes the large-scale fading with its $k$th diagonal element given by $\beta_k$, and $\bf{R} \in\mathbb{C}^{\mit N\times N}$ is the transmit covariance matrix satisfying $\rm tr(\bf{R})=\mit N$. Similarly, the channel matrix between the BS and the eavesdropper is $\bf{H}\rm_e=\bf{D}^{\rm \frac{1}{2}}\rm_e\widetilde{\bf{H}}_e\bf{R}^{\rm \frac{1}{2}}$, where $\widetilde{\bf{H}}\rm_e\in\mathbb{C}\mit^{M\times N}$ contains i.i.d. Rayleigh fading channel coefficients following $\mathcal{CN}(0,1)$. The diagonal matrix $\bf{D}\rm_e$ represents the large-scale fading at the eavesdropper with identical diagonal entries $\beta \rm^e$.

The BS desires to transmit the symbols $\bf{s}=[\mit s\rm_1,\mit s\rm_2,...,\mit s\mit_K]\in\mathbb{C}^{\mit K\times\rm1}$ to the legitimate users with $\mathbb{E}\{\bf{s}\bf{s}\mit ^H\}=\bf{I}\mit_K$ using a linear precoding matrix $\bf{W}\in\mathbb{C}^{\mit N\times K}$. The eavesdropper's channel state information (CSI) is assumed unknown to the BS, and AN is injected to ensure confidential communication. The AN vector $\bf{t}\sim\mathcal{CN}(\bf0,\bf I\mit_{N-K})$  is precoded by an AN shaping matrix $\bf{V}\in\mathbb{C}^{\mit N\times (N-K)}$. Denote by $P$ the total transmit power. The power allocation factor $\xi\in(0,1]$ aims to strike a balance between the transmit signal and the AN. The unquantized downlink transmit signal vector $\bf{x}$ is then expressed as
{\setlength\abovedisplayskip{3.4pt} 
\setlength\belowdisplayskip{3.4pt}
\begin{equation}
\bf{x}=\sqrt{\mu}\bf{Ws}+\sqrt{\nu}\bf{Vt},
\end{equation}
where $\mu\triangleq\frac{\xi P}{K}$ and $\nu\triangleq\frac{(1-\xi) P}{N-K}$.

The precoded signal is transmitted after DAC quantization, which is denoted by $\mathcal{Q}(\bf{x})$. Establishing the non-linear quantization model of a finite-bit DAC is challenging. We follow a popular way of charactering the quantizer by a linear function applying the simple additive quantization noise model. The quantized signal vector can accordingly be decomposed as
\begin{equation}
\bf{z}=\mathcal{Q}(\bf{x})=\mit\sqrt{\rm1-\rho} \bf{x}+\bf{q},
\end{equation}
where the quantization noise $\bf{q}$ is assumed to be uncorrelated with the input signal $\bf{x}$, and
\begin{equation}
\bf C_q=\mathbb{E}\{\bf{q}\bf{q}\mit^H\}=\rho\mathbb{E}\big\{\rm diag(\bf{x}\bf{x}\mit^H)\big\}.
\end{equation}
The value of the distortion factor $\rho$ depends on the DAC resolution; for example, it can be chosen as in [5] for DAC resolutions of less than 5 bits, or as $\rho=\frac{\sqrt{3}\pi}{2}\cdot2^{-2b}$ for scenarios with higher precision, where $b$ represents the number of quantization bits. From (1) and (3), the covariance matrix of the quantization noise equals
\begin{equation}
\bf{C_q}=\rho\big[\mu\rm diag(\bf{W}\bf{W}\mit^H)+\nu\rm diag(\bf{V}\bf{V}\mit^H)\big].
\end{equation}

Given the CSI of the legitimate channels, the matrix $\bf{V}$ is designed to lie in the null space of the channel matrix $\bf{H}$, i.e., $\bf{H}\bf{V}=\bf{0}$, which (ideally) makes the AN “invisible” to the legitimate users [13]. Using (1) and (2), the signals received at the users and the eavesdropper are expressed as
\begin{equation}
\bf{y}=\sqrt{1-\rho}(\sqrt{\mu}\bf{HWs}+\sqrt{\nu}\bf{HVt})+\bf{Hq}+\bf{n}
\end{equation}
\begin{equation} 
\bf{y}\rm_e=\sqrt{1-\rho}(\sqrt{\mu}\bf{H}\rm_e\bf{Ws}+\sqrt{\nu}\bf{H}\rm_e\bf{Vt})+\bf{H}\rm_e\bf{q}+\bf{n}\rm_e,
\end{equation} 
where $\bf{n}\sim\mathcal{CN}(\bf{0},\mit \sigma_n\rm^2\bf{I}\mit_K)$ and $\bf{n}\rm_e\sim\mathcal{CN}(\bf{0},\mit \sigma\rm_e\rm^2\bf{I}\mit_M)$ respectively represent the additive noise terms at the users and at the eavesdropper.

\section{Achievable Ergodic Secrecy Rate Analysis}
In this section, we derive a tight lower bound for the ergodic secrecy rate of the secure multiuser massive MIMO downlink and analyze the impact of spatial correlation on the secrecy rate in the presence of low-resolution DACs.

\subsection{Lower Bound on the Achievable Ergodic Secrecy Rate}
We adopt linear matched filter (MF) precoding for data transmission, i.e., $\bf{W}=\bf{H}/{\parallel \bf{H}\parallel}$. The received signal at the $k$th user according to (5) is expressed as
\begin{equation}
y_k=\sqrt{1-\rho}\big(\sqrt{\mu}\bf{h}\mit_k^T\bf{Ws}+\sqrt{\nu}\bf{h}\mit_k^T\bf{Vt}\big)+\bf{h}\mit_k^T\bf{q}+\mit n_k.
\end{equation}
Then, under the assumption of Gaussian distributed interference, a lower bound on the  ergodic rate for the $k$th user can be calculated as
\begin{small}
\begin{equation}
R_k=\mathbb{E}\big\{\rm log_2(1+\mit\gamma_k)\big\},
\end{equation}
\end{small}
\begin{small}
\begin{equation}
\gamma_k=\frac{(1-\rho)\mu\big|\bf{h}\mit_k^T\bf{w}\mit_k\big|\rm^2}{\varrho+\bf{h}\mit_k^T\bf{C_q}\bf{h}\mit_k^*+\rm(1-\rho)\nu\bf{h}\mit_k^T\bf{V}\bf{V}\mit^H\bf{h}\mit_k^*+\sigma\mit_n\rm^2},
\end{equation}
\end{small}where {\small$\varrho=(1-\rho)\mu\sum\limits_{j\neq k}\big|\bf{h}\mit_k^T\bf{w}\mit_j\big|\rm^2$}, $\bf{h}\mit_k^T$ denotes the $k$th row of $\bf{H}$, and $\bf{w}\mit_k$ is the $k$th column of $\bf{W}$.
Note that the numerator of $\gamma_k$ is the power of the desired signal component for the $k$th user, and the denominator represents the power from inter-user interference, quantization noise from the low-resolution DACs, AN leakage, and thermal noise.

\begin{lemma}
A lower bound on the achievable rate (8) of user $k$ is given by
\begin{small}
\begin{equation}
\underline{R}_k=\rm log_2\bigg(1+\frac{(1-\rho)\beta^2\mit_k\gamma\rm_0\xi \mit N/\sum_{i\rm=1}\mit^{K}\beta_i}{\varrho'+\rho\mit\beta_k\gamma\rm_0+1}\bigg),
\end{equation} 
\end{small}where {\small$\varrho'=(1-\rho)\xi\gamma_0 \beta\mit_k\rm tr(\bf{R}\rm^2)\mit\sum\limits_{j\neq k}\beta_j/(N\sum_{i=\rm1}\mit^{K}\beta_i)$}, and {\small$\gamma_0=\frac{P}{\sigma_n^2}$} is the average SNR.
\end{lemma}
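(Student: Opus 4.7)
The plan is to apply Jensen's inequality to pull the expectation inside the logarithm, reducing the problem to a deterministic ratio of moments, and then to evaluate those moments explicitly using standard Gaussian quadratic-form identities for the correlated channel $\mathbf{h}_k \sim \mathcal{CN}(\mathbf{0}, \beta_k \mathbf{R})$. As a preliminary simplification, I would exploit the AN shaping constraint $\mathbf{HV} = \mathbf{0}$, which causes the leakage term $(1-\rho)\nu\,\mathbf{h}_k^T \mathbf{V}\mathbf{V}^H \mathbf{h}_k^*$ in the denominator of (9) to vanish identically, so only inter-user interference, quantization noise, and thermal noise survive.

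Since the map $x \mapsto \log_2(1+1/x)$ is convex on $(0,\infty)$, Jensen's inequality applied to $\log_2(1+\gamma_k) = \log_2\bigl(1 + 1/(1/\gamma_k)\bigr)$ yields
\[ R_k \geq \log_2\!\left(1 + \frac{1}{\mathbb{E}\{1/\gamma_k\}}\right), \]
so the problem reduces to computing $\mathbb{E}\{1/\gamma_k\}$. I would then approximate this quantity by the ratio of the expected denominator to the expected numerator in (9), a substitution that is tight in the large-$N$ regime through concentration of the underlying quadratic forms (e.g., $\|\mathbf{h}_k\|^2/N \to \beta_k$ and $\|\mathbf{H}\|_F^2/N \to \sum_i\beta_i$ almost surely).

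The core computational task is evaluating three quadratic-form expectations under the common correlation model with $\mathrm{tr}(\mathbf{R}) = N$ and unit diagonal: the signal power $\mathbb{E}\{|\mathbf{h}_k^T \mathbf{w}_k|^2\}$ for the MF column $\mathbf{w}_k \propto \mathbf{h}_k^*$, which after normalization of $\mathbf{W}$ produces a term proportional to $\beta_k^2 N/\sum_i \beta_i$; the inter-user interference $\mathbb{E}\{|\mathbf{h}_k^T \mathbf{w}_j|^2\}$ for $j \neq k$, which by independence of $\mathbf{h}_k$ and $\mathbf{h}_j$ together with the identity $\mathbb{E}\{\tilde{\mathbf{h}}_k^T \mathbf{R}\tilde{\mathbf{h}}_j^* \tilde{\mathbf{h}}_j^T \mathbf{R}\tilde{\mathbf{h}}_k^*\} = \mathrm{tr}(\mathbf{R}^2)$ contributes the factor $\beta_k \beta_j \mathrm{tr}(\mathbf{R}^2)$ carrying the entire spatial-correlation signature; and the quantization-noise term $\mathbb{E}\{\mathbf{h}_k^T \mathbf{C_q} \mathbf{h}_k^*\}$, which by the diagonal-Gaussian identity $\mathbb{E}\{\mathbf{h}_k^T \mathrm{diag}(\mathbf{A})\mathbf{h}_k^*\} = \beta_k \mathrm{tr}(\mathbf{A})$ together with $\mathrm{tr}(\mathbf{W}\mathbf{W}^H) = K$, $\mathrm{tr}(\mathbf{V}\mathbf{V}^H) = N-K$, and the total power relation $\mu K + \nu(N-K) = P$, collapses cleanly to $\rho \beta_k P$.

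Assembling these pieces and dividing numerator and denominator by $\sigma_n^2$ to introduce $\gamma_0 = P/\sigma_n^2$ should produce (10) directly, with the correlation matrix entering the final bound only through $\mathrm{tr}(\mathbf{R}^2)$ in $\varrho'$. The hard part will be justifying the replacement of $\mathbb{E}\{1/\gamma_k\}$ by the reciprocal of $\mathbb{E}\{\mathrm{num}\}/\mathbb{E}\{\mathrm{den}\}$, because the SINR in (9) is a ratio of correlated quadratic forms in the same vector $\mathbf{h}_k$ (notably the factor $\|\mathbf{h}_k\|^4/\|\mathbf{H}\|_F^2$ in the numerator), and an exact closed form under the Kronecker correlation model is unavailable. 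The standard resolution is a concentration argument for large $N$, which is precisely what yields the ``tight'' qualifier in the lemma rather than ``exact.''
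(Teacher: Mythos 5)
Your computational core is exactly the paper's: the signal power concentrating to $KN\beta_k^2/\sum_i\beta_i$ after MF normalization, the cross term $\mathbb{E}\{|\mathbf{h}_k^T\mathbf{h}_j^*|^2\}=\beta_k\beta_j\,\mathrm{tr}(\mathbf{R}^2)$ carrying the correlation signature, the quantization term collapsing to $\rho P\beta_k$ via $\mu K+\nu(N-K)=P$, and the AN leakage vanishing by $\mathbf{HV}=\mathbf{0}$. Where you differ is the wrapper around the expectation of the log. The paper does not use Jensen at all: it shows each quadratic form in $\gamma_k$ converges almost surely to a deterministic limit (via the trace lemma of [16, Lemma 4] and the law of large numbers giving $\mathbf{C}_q\to\rho(P/N)\mathbf{I}_N$) and then invokes the Continuous Mapping Theorem, so $\underline{R}_k$ is really a deterministic equivalent rather than a finite-$N$ bound. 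Your route---Jensen on the convex map $x\mapsto\log_2(1+1/x)$ followed by replacing $\mathbb{E}\{1/\gamma_k\}$ with a ratio of moments---buys a genuine finite-$N$ inequality at the first step, but the second step is only an approximation whose justification is precisely the concentration argument the paper uses directly; once you have that concentration, the Jensen detour is redundant. Two smaller points to tighten: (i) your ``diagonal-Gaussian identity'' $\mathbb{E}\{\mathbf{h}_k^T\mathrm{diag}(\mathbf{A})\mathbf{h}_k^*\}=\beta_k\mathrm{tr}(\mathbf{A})$ needs $\mathbf{R}$ to have unit diagonal (not merely $\mathrm{tr}(\mathbf{R})=N$), and it silently treats $\mathbf{C_q}$ as independent of $\mathbf{h}_k$ even though $\mathbf{C_q}$ is built from $\mathbf{W}\propto\mathbf{H}$; the paper avoids both issues by first passing to the deterministic limit $\mathbf{C}_q\to\rho(P/N)\mathbf{I}_N$ and then applying the trace lemma, which only requires $\mathrm{tr}(\mathbf{R})=N$. (ii) Neither your argument nor the paper's addresses interchanging the a.s. limit with the expectation in (8), but that is a shared, standard omission in this literature rather than a defect specific to your proposal.
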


\begin{proof}
Please refer to Appendix A. 
\end{proof}

To guarantee secure communication in the worst case, we assume that the eavesdropper has perfect CSI of all legitimate users and can remove all the interference from the legitimate users [2], [3], [10], [11]. According to (6), the ergodic rate of the eavesdropper is expressed as
\begin{equation}
C=\mathbb{E}\rm\bigg\{log_2\big(1+(1-\mit\rho)\mu\bf{w}\mit_k^H\bf{H}\rm_e\mit^H\bf{X}\rm^{-1}\bf{H}\rm_e\bf{w}\mit_k\big)\bigg\},
\end{equation} 
where $\bf{X}$ is defined as
\begin{equation}
\bf{X}=\rm(1-\rho)\nu\bf{H}\rm_e\bf{V}\bf{V}\mit^H\bf{H}\rm_e\mit^H+\bf{H}\rm_e\bf{C_q}\bf{H}\rm_e\mit^H.
\end{equation}
Furthermore, we assume that $\sigma\rm_e\rm^2$ is negligibly small corresponding to the worst case, and consequently, $C$ is independent of the path-loss of the eavesdropper $\beta\rm^e$ [2], [3], [10], [11]. A tight upper bound for $C$ is derived in the following lemma.

\begin{lemma}
For $N\rightarrow\infty$, an upper bound on the eavesdropping rate is given by
\begin{small}
\begin{equation}
\overline{C}=\rm log_2\bigg(1+\mit\frac{\phi M\xi\kappa\beta_k/\sum_{i=\rm1}\mit^{K}\beta_i}{\phi\kappa\rm^2\big(\frac{\mit N}{\rm tr(\bf{R}\rm^2)}-\mit a\big)-\varpi}\bigg),
\end{equation}
\end{small}where {\small$a=\frac{M}{N}$, $b=\frac{K}{N}$, $\rho'=\frac{\rho}{1-\rho}$, $\phi=1-b$, $\kappa=1-\xi+\rho'$}, and {\small$\varpi=ab(1-\xi)^2$}. 
\end{lemma}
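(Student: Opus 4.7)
The plan is to combine Jensen's inequality with large-$N$ deterministic-equivalent arguments from random matrix theory. Since $\log_2(1+\cdot)$ is concave, Jensen's inequality applied to (11) gives
\[
C \le \log_2\!\Bigl(1+(1-\rho)\mu\,\mathbb{E}\{\mathbf{w}_k^H\mathbf{H}_e^H\mathbf{X}^{-1}\mathbf{H}_e\mathbf{w}_k\}\Bigr),
\]
so the task reduces to evaluating the expected SINR in the bracket in closed form as $N\to\infty$ and matching it to the stated fraction.

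My first simplification would target $\mathbf{C_q}$. The matched-filter columns of $\mathbf{W}$ and the null-space projector $\mathbf{V}\mathbf{V}^H=\mathbf{I}_N-\mathbf{H}^H(\mathbf{H}\mathbf{H}^H)^{-1}\mathbf{H}$ have diagonals that concentrate around $(K/N)\mathbf{I}_N$ and $((N-K)/N)\mathbf{I}_N$, respectively, in the limit $N\to\infty$. Combined with $\mu K+\nu(N-K)=P$, this collapses $\mathbf{C_q}$ to a scalar multiple of $\mathbf{I}_N$, so that
\[
\mathbf{X}\approx (1-\rho)\bigl[\nu\,\mathbf{H}_e\mathbf{V}\mathbf{V}^H\mathbf{H}_e^H+(\rho'P/N)\,\mathbf{H}_e\mathbf{H}_e^H\bigr].
\]
Conditioning on $\mathbf{H}$, the remaining randomness is in $\mathbf{H}_e$, and I would apply standard trace lemmas together with the deterministic equivalent $\tfrac{1}{N}\mathbf{H}_e\mathbf{A}\mathbf{H}_e^H\simeq \beta^{e}\tfrac{\mathrm{tr}(\mathbf{A}\mathbf{R})}{N}\mathbf{I}_M$ for suitable deterministic $\mathbf{A}$. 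The numerator $\phi M\xi\kappa\beta_k/\sum_i\beta_i$ then comes from $\mathbb{E}\{\|\mathbf{H}_e\mathbf{w}_k\|^2\}$ together with the path-loss scaling $\beta_k/\sum_i\beta_i$ induced by the matched-filter normalization, while the $\kappa=1-\xi+\rho'$ collects the combined contributions of the AN power $(1-\xi)$ and of the quantization noise level $\rho'$.

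The main technical obstacle is the joint appearance of $\mathbf{H}_e$ in both the numerator quadratic form $\mathbf{H}_e\mathbf{w}_k$ and in the inverse $\mathbf{X}^{-1}$. To disentangle these I would invoke the Sherman--Morrison identity to split off the rank-one contribution of $\mathbf{H}_e\mathbf{w}_k$ from $\mathbf{X}$, leaving a term conditionally independent of $\mathbf{H}_e\mathbf{w}_k$ to which the trace lemma applies. The correlation-sensitive factor $N/\mathrm{tr}(\mathbf{R}^2)-a$ in the denominator arises as a second-moment correction in these asymptotics: it captures the variance of quadratic forms $\mathbf{x}^H\mathbf{R}\mathbf{x}$ for Gaussian $\mathbf{x}$, which scales with $\mathrm{tr}(\mathbf{R}^2)$, while the $-a=-M/N$ offset reflects the finite aspect ratio between eavesdropper antennas and transmit antennas. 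After collecting constants and substituting the abbreviations $a$, $b$, $\phi$, $\kappa$, $\varpi$, $\rho'$ from the statement, the claimed upper bound $\overline{C}$ follows.
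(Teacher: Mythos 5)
Your skeleton matches the paper's up to the point where the real work begins: Jensen's inequality, the collapse of $\bf{C_q}$ to $\rho\frac{P}{N}\bf{I}\mit_N$, and the resulting decomposition of $\bf{X}$ into an AN part plus a full-rank quantization-noise part are all exactly what Appendix B does. The gap is in how you propose to invert $\bf{X}$. The first-order deterministic equivalent you invoke, $\frac{1}{N}\bf{H}\rm_e\bf{A}\bf{H}\rm_e^{\mit H}\simeq \beta^e\frac{\rm tr(\bf{A}\bf{R})}{\mit N}\bf{I}\mit_M$, only pins down the mean of $\bf{X}$; if you then invert that scalar you obtain a denominator proportional to $(1-\rho)P\kappa$ with no $\rm tr(\bf{R}^2)$ and no $-a$ term, i.e.\ not the claimed bound. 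The entire correlation dependence $\phi\kappa^2\big(\frac{N}{\rm tr(\bf{R}^2)}-a\big)-\varpi$ is produced by a specific device that your write-up does not identify: following [17], the weighted sum $\big[(1-\rho)\nu+\rho\frac{P}{N}\big]\bf{H}\rm_e\bf{V}\bf{V}^{\mit H}\bf{H}\rm_e^{\mit H}+\rho\frac{P}{N}\bf{H}\rm_e\bf{V}_0\bf{V}_0^{\mit H}\bf{H}\rm_e^{\mit H}$ is approximated (after eigendecomposing $\bf{R}$) by a \emph{single} Wishart matrix $\mathcal{W}_M(\eta,\varphi\bf{I}\mit_M)$ whose parameters are fixed by matching the first \emph{two} moments --- the second moment is where $\sum_i\lambda_i^2=\rm tr(\bf{R}^2)$ enters --- and then the exact inverse-Wishart mean $\mathbb{E}\{\bf{A}^{-1}\}=\frac{1}{n-m}\bf{I}\mit_m$ supplies the $\eta-M$ (hence $-a$) correction. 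Calling this "a second-moment correction" describes the answer but is not an argument; a rigorous deterministic equivalent for $\mathbb{E}\{\bf{w}\mit_k^H\bf{H}\rm_e^{\mit H}\bf{X}^{-1}\bf{H}\rm_e\bf{w}\mit_k\}$ would in fact depend on the whole spectrum of $\bf{R}$ through a fixed-point equation, so the specific $\rm tr(\bf{R}^2)$ form of (13) is unobtainable without the two-moment Wishart fit (or an equivalent moment-matching step).

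Two smaller remarks. Your Sherman--Morrison idea for decoupling $\bf{H}\rm_e\bf{w}\mit_k$ from $\bf{X}^{-1}$ is a legitimate (arguably more careful) alternative to what the paper does, which is simply to let $\bf{X}^{-1}\xrightarrow{a.s.}\frac{1}{\varphi(\eta-M)}\bf{I}\mit_M$ and factor the expectation; but it is not where the difficulty lies. Also, you still need the explicit evaluation $\mathbb{E}\big\{\bf{w}\mit_k^H\bf{H}\rm_e^{\mit H}\bf{H}\rm_e\bf{w}\mit_k\big\}=\frac{MK\beta_k}{N\sum_{i=1}^{K}\beta_i}\rm tr(\bf{R}^2)$, which contributes a $\rm tr(\bf{R}^2)$ to the numerator that must cancel against the one generated in $\varphi(\eta-M)$ to leave the stated $\frac{N}{\rm tr(\bf{R}^2)}$; your proposal asserts the numerator $\phi M\xi\kappa\beta_k/\sum_i\beta_i$ directly without tracking this cancellation.
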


\begin{proof}
Please refer to Appendix B.
\end{proof}

Applying \emph{Lemma 1} and \emph{Lemma 2}, a lower bound on the ergodic secrecy rate of the $k$th user is obtained in \emph{Theorem 1}.
 
\begin{theorem}
For $N\rightarrow\infty$, the achievable ergodic secrecy rate for the $k$th user is lower bounded by
\begin{small}
\begin{equation}
\underline{R}_{\rm sec}\triangleq{[}\underline{R}_k-\overline{C}{]}^+,
\end{equation}
\end{small}where $[x]^{+}=\rm max\{0,\mit x\}$, and $\underline{R}_k$ and $\overline{C}$ are given in (10) and (13), respectively. 
\end{theorem}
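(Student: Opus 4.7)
The plan is to combine Lemmas 1 and 2 by a one-line monotonicity argument. First I would recall the standard definition of the achievable ergodic secrecy rate used throughout the physical-layer security literature cited in [2], [3], [10], [11], namely $R_{\rm sec}=[R_k-C]^+$, where $R_k$ is the ergodic legitimate rate defined in (8) and $C$ is the ergodic eavesdropping rate defined in (11). This expresses the secrecy rate as the non-negative part of the gap between what the intended user can decode and what the eavesdropper can extract.

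Next I would invoke Lemma 1, which gives $\underline{R}_k\le R_k$, together with Lemma 2, which in the regime $N\rightarrow\infty$ gives $\overline{C}\ge C$. Chaining these two inequalities produces $R_k-C\ge \underline{R}_k-\overline{C}$. Because the map $x\mapsto [x]^+$ is non-decreasing, applying it to both sides preserves the inequality, so $R_{\rm sec}=[R_k-C]^+\ge[\underline{R}_k-\overline{C}]^+\triangleq\underline{R}_{\rm sec}$, which is exactly the statement of the theorem.

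There is no genuine obstacle at this assembly stage: all of the technical effort has already been absorbed into Appendices A and B, which must contend with the quantization-noise covariance structure in (4), the correlation-induced $\mathrm{tr}(\mathbf{R}^2)$ terms arising from the Kronecker model, and the large-dimensional matrix inversion implicit in controlling (11) via (12). The only point worth flagging explicitly when writing the proof is that the asymptotic regime $N\rightarrow\infty$ in Theorem 1 is inherited from Lemma 2, whereas Lemma 1 already holds for finite $N$; the two bounds therefore coexist consistently in the large-system limit adopted by the theorem, and the chained inequality above is valid in that same limit.
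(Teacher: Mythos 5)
Your proposal is correct and matches the paper's own (one-line) argument: the paper simply states that Theorem 1 follows by "applying Lemma 1 and Lemma 2," i.e., exactly the chaining $R_k - C \ge \underline{R}_k - \overline{C}$ followed by monotonicity of $[\cdot]^+$ that you spell out. One small caveat: your remark that Lemma 1 "already holds for finite $N$" is not quite accurate, since its proof in Appendix A also relies on almost-sure convergence as $N\rightarrow\infty$, but this only strengthens your observation that both bounds coexist in the same large-system limit.
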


If no spatial correlation is present, i.e., $\bf R = I$, then (14) reduces to
\begin{small}
\begin{equation}
\begin{aligned}
\begin{split}
&\underline{R}_{\rm sec}=\bigg[\rm log_2\bigg(1+\frac{(1-\mit \rho)\beta_k\rm^2\rm\gamma_0\mit\xi N/\sum_{i=\rm1}\mit^{K}\beta_i}{(1-\rho)\gamma_0\beta\mit_k\mit\xi\sum\limits_{j\neq k}\beta_j/\sum_{i=\rm1}\mit^{K}\beta_i+\rho\beta\mit_k\rm\gamma_0+1}\bigg)
\\&-\rm log_2\bigg(1+\mit\frac{\phi M\xi\kappa\beta_k/\sum_{i=\rm1}\mit^{K}\beta_i}{\phi\kappa\rm^2(1-\mit a)-\varpi}\bigg)\bigg]^+.
\end{split}
\end{aligned}
\end{equation}
\end{small}As expected, $\underline{R}_{\rm sec}$ increases with $N$ and $\gamma_0$.

\subsection{Optimal Power Allocation Strategy for AN}
Here we investigate the impact of the power allocation factor on the ergodic secrecy rate in (14) under spatially correlated channels. Assume $\mit ab\ll\rm1$ in (13), which is reasonable in massive MIMO equipped with a large number of antennas. The derivative of $\underline{R}_{\rm sec}$ w.r.t. $\xi$ is calculated as
\begin{small}
\begin{equation}
\begin{aligned}
\begin{split}
&\frac{\partial \underline{R}_{\rm sec}}{\partial \xi}=\frac{L_1L_2}{\rm{ln2}\mit(L\rm_3\xi+\mit L\rm_2)[\mit L\rm_2+\xi(\mit L\rm_1+\mit L\rm_3)]}
\\&-\frac{M \rm tr(\bf{R}\rm^2)\rm(1+\rho')\mit\beta_k}{\rm ln2\big[\sum_{\mit i\rm=1}\mit^{K}\mit \beta_i\big(N- \rm tr(\bf{R}\rm^2)\mit a\big)\kappa\rm^2+\mit M\xi\beta_k \rm tr(\bf{R}\rm^2)\mit\kappa\big]},
\end{split}
\end{aligned}
\end{equation}
\end{small}where {\small$L_1=(1-\rho)\beta_k\rm^2\gamma_0 \mit N/\sum_{i\rm=1}\mit^{K}\beta_i$}, {\small$L_2=\rho\beta_k\gamma_0+1$}, and {\small$L_3=(1-\rho)\gamma_0 \beta_k\rm tr(\bf{R}\rm^2)\mit\sum\limits_{j\neq k}\beta_j/(N\sum_{i=\rm1}\mit^{K}\beta_i)$}. Since {\small$\frac{\partial \underline{R}_{\rm sec}}{\partial \xi}>0$} for small $\xi$ and {\small$\frac{\partial \underline{R}_{\rm sec}}{\partial \xi}<0$} for large $\xi$, the optimal power allocation factor $\xi^{*}$ that achieves the highest secrecy rate is obtained by setting {\small$\frac{\partial \underline{R}_{\rm sec}}{\partial \xi}=0$}. A closed-form expression for $\xi^{*}$ can be founded as follows:
\begin{small}
\begin{equation}
\xi^{*}=\frac{-B-\sqrt{B^2-4AC}}{2A},
\end{equation}
\end{small}where the parameters $A$, $B$, and $C$ are given by
\begin{small}
\begin{equation}
A=L_1L_2G_2-L_1L_2G_3-G_1L_3(L_1+L_3),
\end{equation}
\begin{equation}
B=(1+\rho')L_1L_2(G_3-2G_2)-G_1L_2(L_1+2L_3),
\end{equation}
\begin{equation}
C = G_2L_1L_2(1+\rho')^2-G_1L_2^2,
\end{equation}
\end{small}and {\small$G_1=M \rm tr(\bf{R}\rm^2)(1+\mit \rho')\beta_k$}, {\small$G_2=\sum_{i=1}^{K}\beta_i\big(N- \rm tr(\bf{R}\rm^2)\mit a\big)$}, and {\small$G_3=M\beta_k \rm tr(\bf{R}\rm^2)$}.

Assuming $\beta_k=1, 1\le k\le K$, we can simplify the above expressions to evaluate the impact of spatial correlation on $\xi^{*}$ for different DAC resolutions. Comparing the value of $\xi^{*}$ for the special case of an i.i.d. channel, i.e.,$\rm tr(\bf{R}\rm^2)=\mit N$ with a fully correlated channel, i.e., $\rm tr(\bf{R}\rm^2)=\mit N\rm^2$ for a Hermitian Toeplitz correlation matrix, we can easily observe that $\xi^{*}$ decreases when $\rm tr(\bf{R}\rm^2)$ increases from $N$ to $N^2$. The relationship between $\xi^{*}$ and the design parameters, including the DAC resolution and channel correlation coefficient, is verified in Section IV through numerical results.

\subsection{Impact of Spatial Correlation}
We first analyze the impact of the antenna ratio $a$ under the correlated channel condition when AN is injected. In (14), $\underline{R}_{\rm sec}$ decreases with respect to $a$. Considering the special case of $\beta_k=\beta, 1\le k\le K$, and $\xi\rightarrow0$, by setting $R\rm_{sec}=0$, the maximum number of eavesdropper antennas that still allows for a positive secrecy rate can be obtained from the following proposition.
\begin{proposition}
If a positive secrecy rate can be achieved, then the maximum antenna ratio $a$ is obtained as
\begin{small}
\begin{equation}
a\rm_{sec}=\frac{(1-\mit b)N\gamma\rm_0}{tr(\bf{R}\rm^2)\big[\gamma_0 \rho\mit b(\rho-\beta-\rm 2)+\gamma_0(1+\beta \rho)+1-\mit b\big]}.
\end{equation}
\end{small}
\end{proposition}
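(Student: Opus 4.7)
The plan is to specialize Theorem 1 to the symmetric case $\beta_k = \beta$, take $\xi \to 0$, and solve for the value of $a$ at which $\underline{R}_{\rm sec}=0$. Under $\beta_k = \beta$ the factor $\sum_{i=1}^{K}\beta_i$ in (10) and (13) becomes $K\beta$, which collapses the user's numerator to $(1-\rho)\beta\gamma_0\xi N/K$ and the eavesdropper's numerator to $\phi M\xi\kappa/K$. Since $\underline{R}_{\rm sec} = [\underline{R}_k - \overline{C}]^+$ and $\log_2(\cdot)$ is monotone, setting $\underline{R}_{\rm sec} = 0$ is equivalent to equating the two SINRs inside (10) and (13).

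Next I would take $\xi \to 0^+$. Both SINRs vanish linearly in $\xi$: in the user's SINR the term $\varrho'$ is itself $O(\xi)$ and drops out of the denominator at leading order, leaving $\rho\beta\gamma_0 + 1$; in the eavesdropper's SINR the factor $\kappa = 1-\xi+\rho'$ tends to $1+\rho' = 1/(1-\rho)$, and $\varpi = ab(1-\xi)^2$ tends to $ab$. Equating the coefficients of $\xi$ on both sides produces a single scalar equation relating $a$ to the remaining parameters. I would then substitute $M = aN$ to make the $a$-dependence explicit, and use the identity $1+\rho' = 1/(1-\rho)$ to clear the $\rho'$ factors, at which point the equation is linear in $a$.

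Isolating $a$ gives a closed-form expression whose numerator is proportional to $(1-b)N\gamma_0/{\rm tr}(\mathbf{R}^2)$; collecting the $\gamma_0$, $\rho$, $b$, and $\beta$ contributions in the denominator into the compact form $\gamma_0\rho b(\rho-\beta-2) + \gamma_0(1+\beta\rho) + 1 - b$ recovers the stated formula (21). The main obstacle is purely algebraic bookkeeping through the cross-multiplication: the $(1-\rho)$, $(1+\rho')^2$, $\phi$, and $\kappa$ factors have to be consolidated carefully so that the denominator telescopes into the displayed cubic-in-$\rho$ form. No new probabilistic or asymptotic arguments are required, since Lemmas 1 and 2 have already replaced the ergodic rates by their deterministic bounds.
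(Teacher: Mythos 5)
Your approach is exactly the one the paper takes: the paper gives no explicit proof of Proposition 1, but the sentence introducing it prescribes precisely your procedure (set $\beta_k=\beta$, let $\xi\to0$, set $\underline{R}_{\rm sec}=0$, i.e., equate the two SINRs at leading order in $\xi$, and solve the resulting equation, linear in $a$ after substituting $M=aN$). The only caveat is that you assert rather than carry out the final cross-multiplication that "telescopes" into the displayed denominator, which is where all the work of this proposition actually resides.
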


\begin{remark}
By direct inspection of (21), the maximum number of eavesdropper antennas that can be tolerated for secure transmission decreases with $\rho$ and the spatial correlation level because Eve can wiretap more information under strongly correlated channels. For the special case of $\rho\rightarrow0$ and $\rm tr(\bf{R}\rm ^2)=\mit N\rm^2$, we have $a\rm_{sec}=\frac{(1-\mit b)\gamma\rm_0}{\mit N(\rm1-\mit b+\gamma\rm_0)}$, which indicates that $a\rm_{sec}$ is independent of the large-scale fading factor with infinite-resolution DACs.
\end{remark}

To extract clear insights, we further consider a representative exponential correlation model [14]
{\setlength\abovedisplayskip{3pt}
\setlength\belowdisplayskip{3pt}
\begin{equation}
\bf{R}\mit_{ij}=\mit\zeta^{|i-j|},
\end{equation}
where $\zeta$ denotes the correlation coefficient. The exponential model is widely adopted in literature and is applicable to analysis for a massive MIMO system with uniform planar array (UPA) scenarios [15].

\begin{proposition}
The secrecy rate gap for different DAC resolutions decreases with the correlation coefficient $\zeta$.
\end{proposition}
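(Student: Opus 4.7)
The plan is to reduce the claim to a monotonicity statement in $\tau := \mathrm{tr}(\mathbf{R}^2)$. Observe that (10) and (13) depend on $\mathbf{R}$ only through $\tau$, so $\underline{R}_{\rm sec}$ is effectively a function of $(\rho,\tau)$; and for the exponential model (22), $\tau = N + 2\sum_{k=1}^{N-1}(N-k)\zeta^{2k}$ is strictly increasing in $\zeta\in[0,1)$. Hence it is enough to show that for any $0\le\rho_1<\rho_2<1$ the gap $\Delta(\tau):=\underline{R}_{\rm sec}(\rho_1,\tau)-\underline{R}_{\rm sec}(\rho_2,\tau)$ satisfies $\partial\Delta/\partial\tau\le 0$ on the admissible range, i.e.\ where $\phi\kappa^{2}(N/\tau-a)>\varpi$ keeps (13) well defined.

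Writing $\Delta(\tau)=-\int_{\rho_1}^{\rho_2}(\partial\underline{R}_{\rm sec}/\partial\rho)(\rho,\tau)\,d\rho$, the task reduces to signing the mixed partial $\partial^{2}\underline{R}_{\rm sec}/(\partial\rho\,\partial\tau)$. I would split $\underline{R}_{\rm sec}=\underline{R}_k-\overline{C}$ and unpack each logarithm. From (10), $\underline{R}_k=\log_{2}(1+A_1/(A_2+A_3\tau))$ with $A_1,A_3\propto(1-\rho)$ and $A_2=\rho\beta_k\gamma_0+1$; as $\tau$ grows, the $(1-\rho)$ factors in $A_1$ and $A_3\tau$ cancel, so the argument of the $\log$ tends to $1+\beta_k N^{2}/(\tau\sum_{j\ne k}\beta_j)$, which is free of $\rho$. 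An analogous cancellation through $\kappa=1-\xi+\rho/(1-\rho)$ is present in (13). Consequently both $|\partial\underline{R}_k/\partial\rho|$ and $|\partial\overline{C}/\partial\rho|$ are driven toward zero as $\tau$ grows, which is the mechanism producing $\partial\Delta/\partial\tau\le 0$.

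To upgrade this asymptotic squeezing to monotonicity on the full admissible range, I would specialize to $\beta_k=\beta$ for all $k$ (as in Section III-B), combine the two rational contributions to $\partial\underline{R}_{\rm sec}/\partial\rho$ over a common denominator, and show that the resulting numerator is a polynomial in $\tau$ of definite sign, using $\rho\in[0,1)$, $\xi\in(0,1]$, and the regime $ab\ll1$ already invoked before (16). The main obstacle is precisely this final sign determination: $\partial\underline{R}_k/\partial\tau<0$ while $\partial\overline{C}/\partial\tau>0$, so the two halves of the mixed partial push in opposite directions and require careful bookkeeping to combine. If the algebra resists a clean closed-form factoring, a natural fallback is to verify the sign at the two endpoints $\tau=N$ (i.i.d.\ case) and $\tau$ near its admissibility bound, and then appeal to continuity together with the concavity of $\log$ to interpolate intermediate values; this would be enough to establish the monotonicity and is consistent with the numerical validation to be reported in Section IV.
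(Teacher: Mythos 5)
Your reduction is the same one the paper uses: write the gap between two resolutions $\rho_1<\rho_2$ as $-\int_{\rho_1}^{\rho_2}(\partial\underline{R}_{\rm sec}/\partial\rho)\,d\rho$ and show the integrand increases with the correlation level (the paper parameterizes by $\widetilde\zeta=\lim_{N\to\infty}{\rm tr}({\bf R}^2)/N=(1+\zeta^2)/(1-\zeta^2)$ rather than your $\tau={\rm tr}({\bf R}^2)$, but both are increasing in $\zeta$, so that difference is immaterial). The gap in your proposal is the decisive sign step. The paper completes it by computing $\partial\underline{R}_k/\partial\rho$ and $\partial\overline{C}/\partial\rho$ in closed form ((23) and (25)) and observing that \emph{each} term of $\partial\underline{R}_{\rm sec}/\partial\rho=\partial\underline{R}_k/\partial\rho-\partial\overline{C}/\partial\rho$ is separately increasing in $\widetilde\zeta$: $\partial\underline{R}_k/\partial\rho<0$ with $\widetilde\zeta$ entering only through the denominator factors $\Upsilon,\Psi$, so it rises toward zero, while $\partial\overline{C}/\partial\rho<0$ becomes \emph{more} negative as $\widetilde\zeta$ grows, so $-\partial\overline{C}/\partial\rho$ also rises; no common-denominator bookkeeping is needed. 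Your proposal misses this structure in two ways. First, your claimed mechanism --- that both $|\partial\underline{R}_k/\partial\rho|$ and $|\partial\overline{C}/\partial\rho|$ are driven toward zero as the correlation grows --- is wrong for the eavesdropper term: by (25) its magnitude scales like $\widetilde\zeta/[\kappa-(M\xi\beta_k/\sum_{i}\beta_i-a\kappa)\widetilde\zeta]$ and \emph{grows} with $\widetilde\zeta$ on the admissible range; the conclusion survives only because this term enters the secrecy rate with a minus sign. Second, the ``opposing directions'' you identify as the main obstacle concern $\partial\underline{R}_k/\partial\tau$ versus $\partial\overline{C}/\partial\tau$, which is the wrong order of differentiation to worry about: once you differentiate in $\rho$ first, the two contributions cooperate and the obstacle disappears.

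Your fallback is also not a proof. Verifying the sign of the mixed partial at $\tau=N$ and near the admissibility bound and then ``appealing to continuity together with the concavity of $\log$'' does not control the sign at intermediate $\tau$: a continuous function with the same sign at two endpoints can change sign in between, and concavity of the logarithm gives no handle on a mixed second derivative of a \emph{difference} of two logarithms with opposite-signed dependence on $\tau$. To close the argument you must actually produce the two $\rho$-derivatives as explicit functions of the correlation parameter and read off their monotonicity, which is precisely the content of (23)--(25) in the paper.
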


\begin{proof}
$\lim\limits\mit_{N\to\infty}\frac{\rm tr(\bf{R}\rm^2)}{\mit N}=\rm\frac{1+\zeta^2}{1-\zeta^2}$ exists under the exponential correlation model in (22). From (14), we have {\small$\frac{\partial \underline{R}_{\rm sec}}{\partial \rho}=\frac{\partial \underline{R}_k}{\partial \rho}-\frac{\partial \overline{C}}{\partial \rho}$}. The first term {\small$\frac{\partial \underline{R}_k}{\partial \rho}$} is given by
\begin{small}
\begin{equation}
\frac{\partial \underline{R}_k}{\partial \rho}=-\frac{\beta^2_k \gamma_0 \xi N(1+\beta_k \gamma_0)\sum_{i=1}^{K}\beta_i}{\rm ln2(\Upsilon \mit\beta_k\gamma\rm_0+\sum_{\mit i\rm=1}\mit^{K}\beta_i)(\rm\Psi \beta\mit_k\gamma\rm_0+\sum_{\mit i\rm=1}\mit^{K}\beta_i)},
\end{equation}
\end{small}where {\small$\Upsilon=(1-\rho)(N\beta_k+\widetilde{\zeta}\sum_{j\neq k}\beta_j)\xi+\rho\sum_{i=1}^{K}\beta_i$}, {\small$\Psi=\rho \sum_{i=1}^{K}\beta_i+(1-\rho)\xi\widetilde{\zeta}\sum_{j\neq k}\beta_j$}, and {\small$\widetilde{\zeta}=\frac{1+\zeta^2}{1-\zeta^2}$}. 
The expression of {\small$\frac{\partial \overline{C}}{\partial \rho}$} is shown in (24), on the top of the next page.
\newcounter{mytempeqncnt}
\begin{figure*}[!t]
\normalsize
\setcounter{mytempeqncnt}{\value{equation}}
\setcounter{equation}{23}
\begin{small}
\begin{equation}
\label{eqn_dbl_x}
\frac{\partial \overline{C}}{\partial \rho}=-\frac{M\xi\phi \beta_k\widetilde{\zeta}[(1-a \widetilde{\zeta})\phi\kappa^2+\varpi \widetilde{\zeta}]/\sum_{i=1}^{K}\beta_i}
{\rm ln2(1-\rho)^2\mit\big[(a \widetilde{\zeta}-1)\phi\kappa\rm^2+\varpi \widetilde{\zeta}\big]
\big\{\big[(\mit a\kappa\rm^2-\mit M\xi\kappa\beta_k/\sum_{i\rm=1}\mit^{K}\beta_i)\widetilde{\zeta}-\kappa\rm^2\big]\phi+\varpi \widetilde{\zeta}\big\}}
\end{equation}
\end{small}
\setcounter{equation}{\value{mytempeqncnt}}
\hrulefill
\end{figure*}
\setcounter{equation}{24}Assuming $ab\ll1$ for typical massive MIMO systems, (24) can be simplified as
\begin{small}
\begin{equation}
\frac{\partial \overline{C}}{\partial \rho}=-\frac{M\xi\phi \beta_k\widetilde{\zeta}/\sum_{i=1}^{K}\beta_i}
{\rm ln2(1-\rho)^2
\big\{\big[\kappa-(\mit M\xi\beta_k/\sum_{i\rm=1}\mit^{K}\beta_i-a\kappa)\widetilde{\zeta}\big]\kappa\phi\big\}}.
\end{equation}
\end{small}Focusing on the impact of $\zeta$, we observe that {\small$\frac{\partial \overline{C}}{\partial \rho}<0$} and decreases with $\zeta$, while {\small$\frac{\partial \underline{R}_k}{\partial \rho}<0$} and increases with $\zeta$. Therefore, {\small$\frac{\partial \underline{R}_{\rm sec}}{\partial \rho}$} is an increasing function of $\zeta$, which completes the proof.
\end{proof}

\begin{remark} 
From (23) and (25), it shows that {\small$\frac{\partial \underline{R}_{\rm sec}}{\partial \rho}<0$} and {\small$\frac{\partial \underline{R}_{\rm sec}}{\partial \rho}$} is a monotonically increasing function in terms of the level of spatial correlation $\zeta$. It implies that the eavesdropper’s capacity $\overline{C}$ degrades faster than $\underline{R}_k$ does at large $\zeta$. Thus, we conclude that there exists a threshold of correlation coefficient, i.e., $\overline{\zeta}$, where lower-resolution DACs achieve a higher secrecy rate for $\zeta\in(\overline{\zeta},1)$. The value of $\overline{\zeta}$ is obtained from the solution of {\small$\frac{\partial \underline{R}_{\rm sec}}{\partial \rho}=0$} by focusing on the impact of spatial correlation. Note that the higher the correlation the lower the effective dimension (d.o.f.), in the extreme case of $\zeta=1$, the users and Eve are separated only in the angle of arrival domain, which only has dimension 1 instead of $N$. Therefore, quantization noise from lower-resolution DACs could compensate for AN to improve secrecy rate under spatially correlated channel.
\end{remark}

\section{Numerical Results}
In this section, the analytical results are validated through Monte-Carlo simulation. We consider a system with $N=256$, $K=16$, and $M=4$ in all simulations. The large-scale fading is modeled as $\beta_k=(d\rm_{ref}/\mit d_k)^\eta$, where $\eta=3.8$ denotes the path loss exponent, $d\rm_{ref}=300$ (m) and $d_k\leq500$ (m) are, respectively, the reference distance and the distance between the BS and the $k$th user. The expected values in (14) were evaluated by averaging over 1000 random channel realizations.

\begin{figure}[H]
\centering
\centerline{\includegraphics[width=0.26\textwidth]{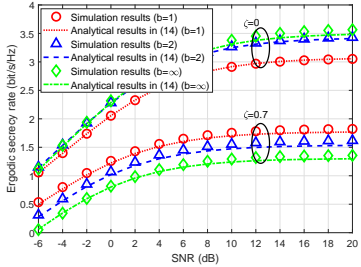}}
\caption{Ergodic secrecy rate and analytical lower bound versus SNR for different spatial correlation coefficient $\zeta$ ($\xi=0.7$)}
\end{figure}

Fig. 1 shows the ergodic secrecy rate versus the average SNR $\gamma_0$ under different DAC resolutions and spatial correlations. The derived lower bound on the secrecy rate is fairly accurate and tight for the entire range of SNR. In addition, it is observed that the secrecy rate is decreasing as $\zeta$ increases. 

Fig. 2 plots the ergodic secrecy rate as a function of the power allocation factor $\xi$. The optimal power allocation factor $\xi^{*}$ largely depends on $\zeta$. Specifically, It is observed that $\xi^{*}$ decreases with $\zeta$. The information signal leakage grows when the spatial correlation is strong. Thus, more power should be allocated for AN to ensure secure communication.

In Fig. 3 (a) we show the ergodic secrecy rate versus $\zeta$ with different DAC resolutions for $\gamma_0$ = 10 dB. We choose a fixed power allocation factor $\xi$ due to the difficulties in optimizing $\xi$ theoretically. The secrecy rate loss due to low-resolution DACs decreases with $\zeta$ as predicted in \emph{Remark 2}. Interestingly, although the channel correlation has a detrimental effect on the secrecy rate, the use of 1-bit DACs can improve the secrecy rate when the spatial correlation coefficient is large. This is because the additional quantization noise serves to increase the level of AN, which is beneficial for spatially colored channels if the AN level has not already been optimized. Finally, Fig. 3 (b) presents the secrecy rate versus $\zeta$ assuming the optimal power allocation $\xi^{*}$ in (17) is chosen. The secrecy rate gaps are $\Delta R_{\rm sec}=0.697$ bit/s/Hz at $\zeta=0$ and $\Delta R_{\rm sec}=0.434$ bit/s/Hz at $\zeta=0.8$, respectively. If optimal power allocation is adopted, then using infinite-resolution DACs can always achieve a higher secrecy rate. In this case, quantization noise from lower-resolution DACs does not compensate for the AN anymore. However, we observe that the secrecy rate loss due to low-resolution DACs decreases with channel correlation coefficient, regardless of the value of $\xi$.

For comparison, Fig. 4 plots the Monte-Carlo simulation by using the spatial correlation model in [9], denoted by $[{\bf R}]_{s,m}=\frac{\beta}{L}\sum_{l=1}^{L}e^{j\pi(s-m)\sin(\varphi_l)}e^{-\frac{\Delta^2}{2}\big(\pi(s-m)\cos(\varphi_l)\big)^2}$, where $\beta$ is the large scale fading coefficient, $\varphi$ is the actual angle-of-arrival and $\Delta$ is the azimuth angular spread. We consider $L=10$ scattering clusters and $\varphi\sim[\frac{-\Delta}{2}$, $\frac{\Delta}{2}]$. It is observed that transitioning from larger to smaller angular spread ($\Delta={\rm50^o}$ to $\Delta={\rm12^o}$) significantly reduces the secrecy rate of the $k$th user for different DAC resolutions. However, the lower resolution DAC is always beneficial for secrecy rate with a fixed $\xi$ under highly correlated channels as expected.

\begin{figure}[H]
\centering
\centerline{\includegraphics[width=0.26\textwidth]{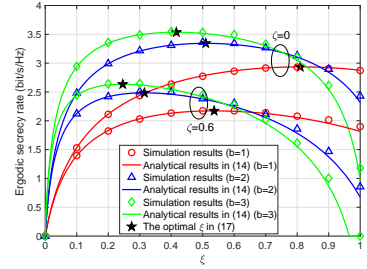}}
\caption{Achievable ergodic secrecy rate versus the power allocation factor $\xi$ for different DAC resolutions ($\gamma_0=10$ dB)}
\end{figure}

\begin{figure}[H]
  \subfigure[with fixed $\xi$=0.7]{
    \label{fig:subfig:a} 
    \includegraphics[width=1.67in]{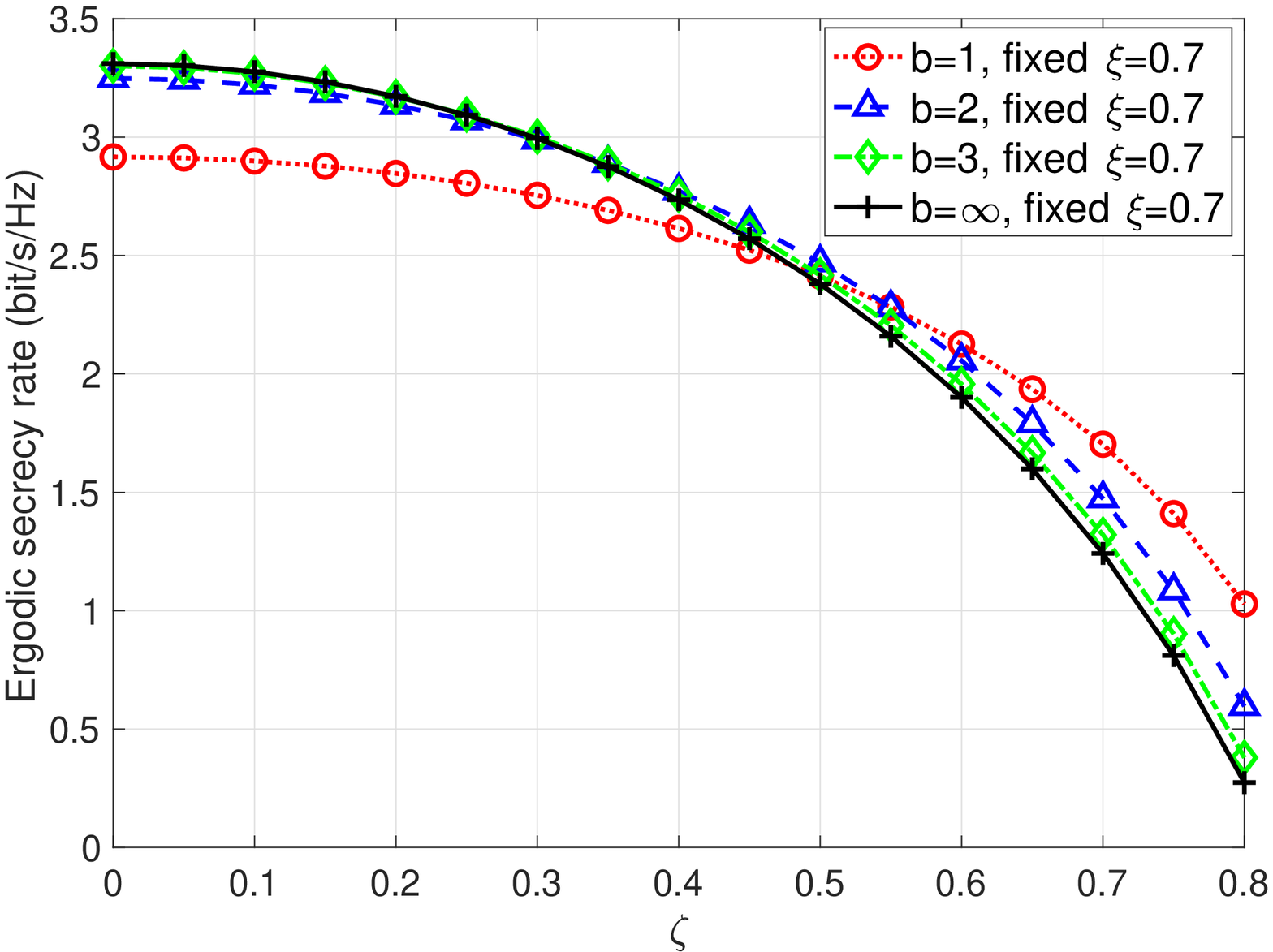}}
  \subfigure[with optimal $\xi^{*}$ in (17)]{
    \label{fig:subfig:b} 
    \includegraphics[width=1.67in]{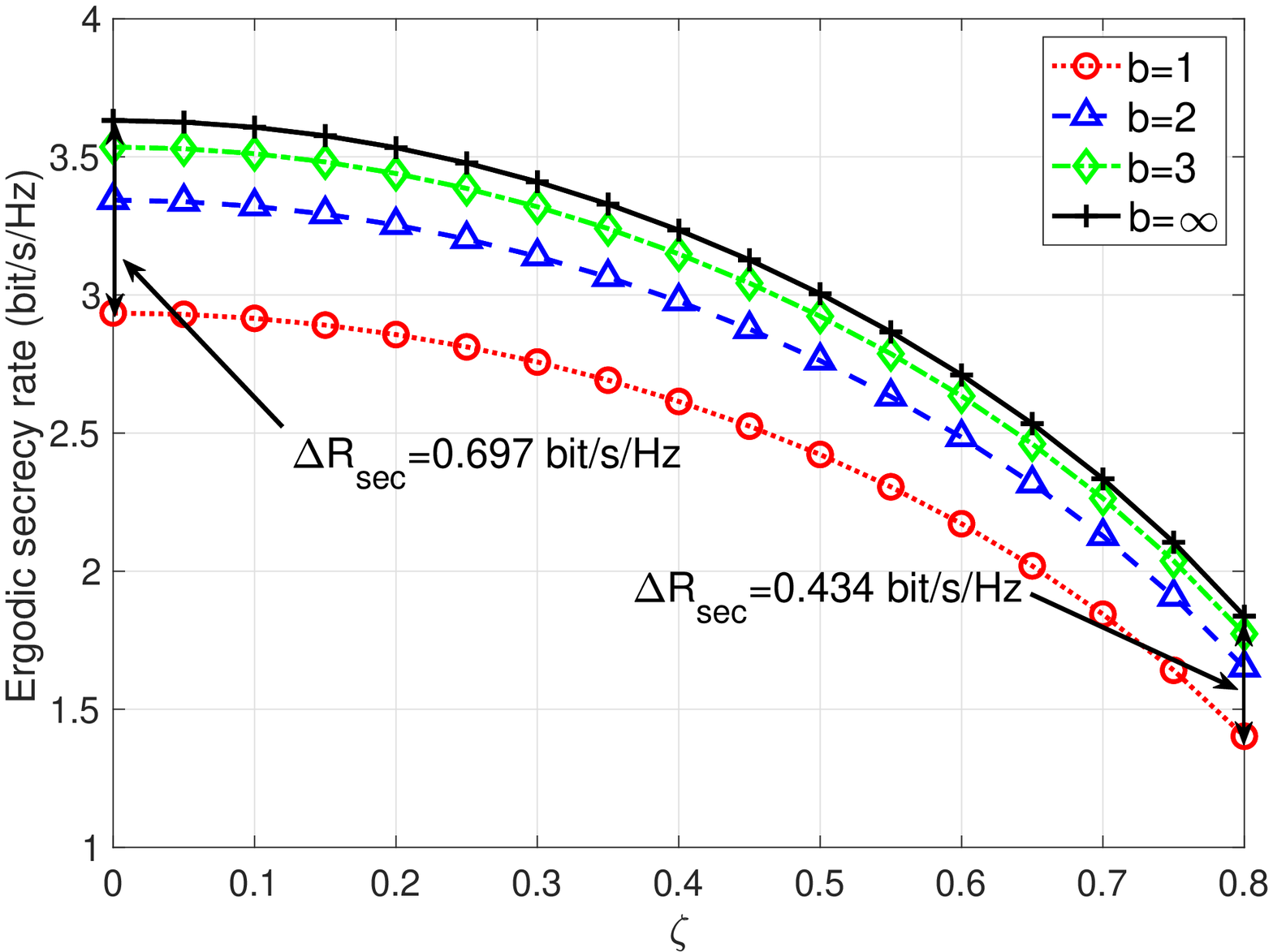}}
  \caption{Achievable ergodic secrecy rate versus $\zeta$ for different DAC resolutions}
  \label{fig:subfig} 
\end{figure}

\begin{figure}[H]
\centerline{\includegraphics[width=0.26\textwidth]{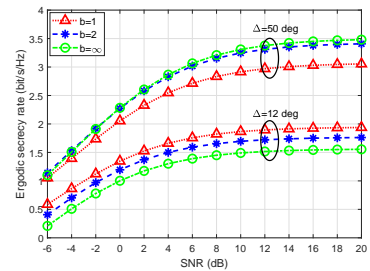}}
\caption{Ergodic secrecy rate versus SNR ($\xi=0.7$)}
\end{figure}

\section{Conclusion}
This paper has characterized the performance of AN-based secure transmission in a massive MIMO downlink system with low-resolution DACs under spatially correlated channels. In particular, it is shown that optimal secrecy performance can be obtained by increasing the amount of power dedicated to artificial noise when the channel correlation increases. Furthermore, the use of low-resolution DACs has been shown to be beneficial to the secrecy performance for a fixed power allocation factor when the channels possess strong spatial correlation. Interesting future extension of this paper includes studying the impact of different spatial correlation matrices at both transmitter and the eavesdropper.

\vspace{-0.3cm}
\section*{Appendix A}
Consider MF precoding satisfying {\small$\rm tr(\bf{W}\bf{W}\mit^H)=K$}, which leads to
{\small$\bf{W}=\mit\sqrt{\frac{K}{N\sum_{i=\rm1}\mit^{K}\beta_i}}\bf{H}$}. First, we directly obtain 
\begin{small} 
\begin{equation}
\begin{aligned}
\begin{split}
\big|\bf{h}\mit_k^T\bf{w}\mit_k\big|\rm^2&=\mit\frac{K}{N\sum_{i=\rm1}\mit^{K}\beta_i}\big|\bf{h}\mit_k^T\bf{h}\mit_k\big|\rm^2\\&=\mit\frac{K\beta\rm^2\mit_k}{N\sum_{i=\rm1}\mit^{K}\beta_i}\big[\rm tr(\bf{R})\big]\rm^2\xrightarrow{a.s.}\mit\frac{KN\beta\rm^2\mit_k}{\sum_{i=\rm1}\mit^{K}\beta_i},
\end{split}
\end{aligned}
\end{equation}
\end{small}where we have used {\small$\frac{1}{\sqrt{N}}\widetilde{\bf{h}}\mit_k^T\bf{R}\rm\frac{1}{\sqrt{\mit N}}\widetilde{\bf{h}}\mit_k^*-\frac{\rm1}{\mit N}\rm tr(\bf{R})\rm\xrightarrow{a.s.}0$} in [16, Lemma 4].
Then, the inter-user interference is calculated as
\begin{small} 
\begin{equation}
\begin{aligned}
\begin{split}
\varrho&=(1-\rho)\mu\sum\limits_{j\neq k}\mit\frac{K}{N\sum_{i=\rm1}\mit^{K}\beta_i}\big|\bf{h}\mit_k^T\bf{h}\mit_j\big|\rm^2
\\&\xrightarrow{a.s.}(1-\rho)\mu K\beta_k\rm tr(\bf{R}\rm^2)\mit\sum_{j\neq k}\beta_j\bigg/\bigg(N\sum_{i=\rm1}\mit^{K}\beta_i\bigg).
\end{split}
\end{aligned}
\end{equation}
\end{small}

For large $N$ and $K$, {\small$\bf{C}_q$} converges to
\begin{small}
\begin{equation}
\bf{C}_q\mit\xrightarrow{a.s.}\rho\frac{P}{N}\bf{I}\mit_N,
\end{equation}
\end{small}where we use the definition of $\mu$ and $\nu$, and the fact that {\small$\rm diag(\bf{W}\bf{W}\mit^H)\xrightarrow{a.s.}\frac{K}{N}\bf{I}\mit_N$} and {\small$\rm diag(\bf{V}\bf{V}\mit^H)\xrightarrow{a.s.}\frac{N-K}{N}\bf{I}\mit_N$} due to the strong law of large numbers. 
Further, we obtain the component of the quantization noise as
\begin{small}
\begin{equation}
\bf{h}\mit_k^T\bf{C_q}\bf{h}\mit_k^*\xrightarrow{a.s.}\rho\frac{P}{N}\beta_k \rm tr(\bf{R})=\mit\rho P \beta_k.
\end{equation}
\end{small}Regarding the AN power, it follows that
\begin{small}
\begin{equation}
\bf{h}\mit_k^T\bf{V}\bf{V}\mit^H\bf{h}\mit^*_k=\rm0,
\end{equation}
\end{small}since {\small$\bf{H}\bf{V}=0$}. Finally, by substituting (26), (27), (29), (30) and the definition of $\mu$ and $\nu$ into (8), and according to the Continuous Mapping Theorem, we complete the proof.

\vspace{-0.3cm}
\section*{Appendix B}
By applying Jensen's inequality, the capacity of the eavesdropper can be upper bounded as 
\begin{equation}
C\le\rm log_2\big[1+(1-\rho)\mu\mathbb{E}\big\{\bf{w}\mit_k^H\bf{H}\rm_e\mit^H\bf{X}\rm^{-1}\bf{H}\rm_e\bf{w}\mit_k\big\}\big].
\end{equation}
Let us first focus on the term {\small$\bf{X}$} and by substituting (28) into (12) yields
\begin{small}
\begin{equation}
\bf{X}\mit\xrightarrow{a.s.}\bigg[(\rm1-\mit\rho)\nu+\rho\frac{P}{N}\bigg]\bf{X}\rm_1+\mit\rho\frac{P}{N}\bf{X}\rm_2,
\end{equation}
\end{small}where {\small$\bf{X}\rm_1=\bf{H}\rm_e\bf{V}\bf{V}\mit^H\bf{H}\rm_e\mit^H$} and {\small$\bf{X}\rm_2=\bf{H}\rm_e\bf{V}\rm_0\bf{V}\rm_0\mit^H\bf{H}\rm_e\mit^H$}. It is obvious that {\small$[\bf{V}\ \bf{V}\rm_0][\bf{V}\ \bf{V}\rm_0]\mit^H=\bf{I}\mit_M$}, because {\small$[\bf{V}\ \bf{V}\rm_0]$} forms a complete orthogonal basis. 
Eigendecompose {\small$\bf{R}$} such that {\small$\bf{R}=\bf{U}\bf{\Lambda}\bf{U}\mit^H$} to decorrelate matrix {\small$\bf{H}\rm_e$} as {\small$\bf{Z}=\bf{H}\rm_e\bf{\Lambda}\rm^{-\frac{1}{2}}\bf{U}\mit^H$}, where {\small$\bf{\Lambda}=\rm \widetilde{diag}(\lambda\rm_1\mit,...,\lambda_N)$} is the diagonal matrix of the eigenvalues of {\small$\bf{R}$} and the columns of {\small$\bf{U}$} consist of the corresponding eigenvectors. Since {\small$\bf{U}$} is unitary, the statistics of {\small$\bf{Z}\bf{U}$} are identical to those of {\small$\bf{Z}$}. Thereby, the distributions of {\small$\bf{X}\rm_1$} and {\small$\bf{X}\rm_2$} are the same as
\begin{small}
\begin{equation} 
\mit\sum_{i=\rm1}\mit^{N}\mit\sum_{j=\rm1}\mit^{N}\lambda_i\rm^{\frac{1}{2}}\lambda\mit_j\rm^{\frac{1}{2}}\bf{z}\mit_i\bf{v}\mit_{i}\bf{v}\mit_{j}^H\bf{z}\mit_j^H
\end{equation}
\end{small}and
\begin{small}
\begin{equation} 
\mit\sum_{i=\rm1}\mit^{N}\mit\sum_{j=\rm1}\mit^{N}\lambda_i\rm^{\frac{1}{2}}\lambda\mit_j\rm^{\frac{1}{2}}\bf{z}\mit_i\bf{v}\rm_{0,\mit i}\bf{v}\rm_{0,\mit j}\mit^H\bf{z}\mit_j^H,
\end{equation}
\end{small}where {\small$\bf{z}\mit_i$} is the $i$th row of {\small$\bf{Z}$, $\bf{v}\mit_i$} and {\small$\bf{v}\rm_{0,\mit i}$} are $i$th column of {\small$\bf{V}$} and {\small$\bf{V}\rm_0$}, respectively. Following the same approach in [17], {\small$\bf{Y}=\rm\big[(1-\mit \rho)\nu+\rho\frac{P}{N}\big]\bf{Y}\rm_1+\mit\rho\frac{P}{N}\bf{Y}\rm_2$} may be accurately approximated as a single scaled Wishart matrix {\small$\bf{Y}\mit\sim\mathcal{W}_M(\eta,\varphi\bf{I}\mit_M)$}, where we define 
${\small\bf{Y}\rm_1}=\mit\sum_{m=\rm1}\mit^{N}\lambda_m\bf{z}\mit_m\bf{v}\mit_{m}\bf{v}\mit_m^H\bf{z}\mit_m^H$ and ${\small\bf{Y}\rm_2}=\mit\sum_{n=\rm1}\mit^{N}\lambda_n\bf{z}\mit_n\bf{v}\rm_{0,\mit n}\bf{v}\rm_{0,\mit n}\mit^H\bf{z}\mit_n^H$. Equating the first two moments of those matrices with {\small$\bf{Y}\rm_1\sim\mit\sum_{m=\rm1}\mit^{N}\lambda_m\mathcal{W}_M(N-K,\rm\frac{1}{\mit N}\bf{I}\mit_M)$} and {\small$\bf{Y}\rm_2\sim\mit\sum_{n=\rm1}\mit^{N}\lambda_n\mathcal{W}_M(K,\rm\frac{1}{\mit N}\bf{I}\mit_M)$} leads to
\begin{small} 
\begin{equation}
\mit\eta\varphi=\bigg[\rm(1-\mit\rho)\nu+\rho\frac{P}{N}\bigg](N-K)+\rho\frac{P}{N}K
\end{equation}
\end{small}and
\begin{small} 
\begin{equation}
\mit\eta\varphi\rm^2=\frac{\rm tr(\bf{R}\rm^2)}{\mit N}\bigg\{\bigg[(\rm1-\mit\rho)\nu+\rho\frac{P}{N}\bigg]\rm^2\mit(N-K)+\bigg(\rho\frac{P}{N}\bigg)\rm^2\mit K\bigg\},
\end{equation}
\end{small}where we use {\small$\sum_{i=1}^{N}\lambda_i=\rm tr(\bf{R})$} and {\small$\sum_{i=1}^{N}\lambda_i^2=\rm tr(\bf{R}\rm^2)$}. By exploiting the independence of the elements in {\small$\widetilde{\bf{H}}\rm_e$}, we can further obtain {\small$\bf{X}\rm^{-1}\xrightarrow{a.s.}1/(\varphi(\mit \eta-M))\bf{I}\mit_M$} with {\small$\eta>M$}, where we use the property {\small$\bf{A}\rm^{-1}\xrightarrow{a.s.}1/\mit(n-m)\bf{I}\mit_m$} for a Wishart matrix {\small$\bf{A}\sim\mathcal{W}\mit_m(n,\bf{I}\mit_m)$} with {\small$n>m$} [4]. Substituting this result and {\small$\mathbb{E}\big[\bf{w}\mit_k^H\bf{H}\rm_e\mit^H\bf{H}\rm_e\bf{w}\mit_k\big]=\frac{MK\beta_k}{N\sum_{i=\rm1}\mit^{K}\beta_i}\rm tr(\bf{R}\rm^2)$} into (31) completes the proof.

\vspace{-0.3cm}

\end{document}